\documentclass[twoside,11pt]{article}

\usepackage{jmlr2e}
 
\usepackage{subfigure}   
\makeindex

\usepackage{graphicx} 
\usepackage{amsmath}
\usepackage{amssymb}

\usepackage{exscale}

\usepackage{comment}
\usepackage{paralist}

\usepackage{amsmath}%
\usepackage{MnSymbol}%
\usepackage{wasysym}%

\usepackage{algpseudocode}
\usepackage{algorithm}

\usepackage{enumitem}

\usepackage{enumitem}

 \makeatletter
  \def\nl#1#2{\begingroup
     \scalebox{0.85}[1]{\textbf{#2}}%
    \def\@currentlabel{\textnormal{\scalebox{0.85}[1]{\textbf{#2}}}}
     \phantomsection\label{#1}\endgroup
}
\makeatother

\usepackage{enumitem}
\usepackage{geometry}
\usepackage{tikz}
\usetikzlibrary{calc,patterns,decorations.pathmorphing,decorations.markings,arrows}
\usetikzlibrary{quotes,arrows.meta}
\usetikzlibrary{fit}

\tikzstyle{block} = [draw, rectangle, minimum height=3em, minimum width=3em]
\tikzstyle{sum} = [draw, circle, node distance=1.5cm]
\tikzstyle{input} = [coordinate]
\tikzstyle{output} = [coordinate]

\newcommand{\skor}{\mathcal{S}}

\newcommand{\ones}{\mathbf{1}}

\newcommand{\E}{\mathbb{E}}
\newcommand{\obs}{Y}

\newcommand{\reals}{\mathbb{R}}

\renewcommand{\leq}{\leqslant}
\renewcommand{\geq}{\geqslant}
\renewcommand{\epsilon}{\varepsilon}
\renewcommand{\phi}{\varphi}
\renewcommand{\th}{\theta}
\newcommand{\noise}{z}
\newcommand{\wloss}{C}

\newcommand{\loss}{\ell}
\newcommand{\Xt}{X^\th}
\newcommand{\Loss}{L}
\newcommand{\kernel}{K}
\newcommand{\Th}{\Theta}
\newcommand{\CF}{\mathcal{F}}
\newcommand{\1}{\boldsymbol{1}}

\newcommand{\PR}{\mathbb{P}}
\newcommand{\R}{\reals}

\begin{document}

\title{Efficient Counterfactual Estimation of Conditional Greeks via Malliavin-based Weak Derivatives\thanks{This research was supported by NSF grants CCF-2312198 and CCF-2112457 and U. S. Army Research Office under grant W911NF-24-1-0083}}

\author{\name Vikram Krishnamurthy \email vikramk@cornell.edu \\
       \addr School of Electrical and Computer Engineering\\
       Cornell University\\
       Ithaca, NY, USA \AND
       \name Luke Snow \email las474@cornell.edu \\
       \addr School of Electrical and Computer Engineering\\
       Cornell University\\
       Ithaca, NY, USA}

\editor{}

\maketitle



\begin{abstract}
  We study counterfactual gradient estimation of  conditional loss functionals of diffusion processes. In quantitative finance, these gradients are known as conditional Greeks: the sensitivity of expected market values, conditioned on some event of interest. The difficulty is that when the conditioning event has vanishing or zero probability, naïve Monte Carlo estimators are prohibitively inefficient; kernel smoothing, though common, suffers from slow convergence. We propose  a two-stage kernel-free methodology. First, we show using Malliavin calculus that the conditional loss functional of a diffusion process admits an exact
  representation  as a Skorohod integral, yielding classical Monte-Carlo estimator variance and convergence rates.
  Second, we establish that a weak derivative estimate of the conditional loss functional with respect to
  model parameters can be evaluated with constant variance, in contrast to the widely used score function method whose variance grows linearly in the sample path length. Together, these results yield an efficient framework for counterfactual conditional stochastic gradient algorithms and financial Greek computations in rare-event regimes.
\end{abstract}


\section{Introduction and Problem Formulation}

Consider the  stochastic differential equation (SDE)
\begin{equation}
  \label{eq:sde}
  dX_t^{\theta} = b_\th(X_t^{\theta},t) dt + \sigma_{\theta}(X_t^{\theta},t) dW_t, \quad t \in [0,T]
\end{equation}
where $X_t^{\theta} \in \reals^n$, $\theta\in\reals^p$ is a parameter, and $W_t \in \reals^d$ is $d$-dimensional standard Brownian motion.
Denote $X^{\theta} = \{X_t^{\theta}\}_{t\in[0,T]} \in \mathcal{C}[0,T]$, where $\mathcal{C}[0,T]$ is the space of continuous $\reals^n$-valued paths on $[0,T]$. Our aim is to estimate the $\theta$-gradient of the conditional loss function $L(\theta)$:
\begin{equation}
  \label{eq:objective}
  \nabla_{\theta}\Loss(\th) =  \nabla_{\theta}\E[  \ell(X^\th)   \mid  g(X^\th) = 0]
\end{equation}
where $\Th$ is a compact subset of $\reals^p$, $\Loss(\cdot)$ is differentiable w.r.t. $\theta$, and $\loss(\cdot), g(\cdot): \mathcal{C}[0,T] \to \reals$ are functionals. In quantitative finance, the gradient \eqref{eq:objective} is a \textit{conditional} Greek: the sensitivity of an expected market value, conditioned on some event of interest. 

\emph{Counterfactual Estimation.} Counterfactual estimation and learning is widely studied in machine learning -- but typically for multivariate random variables. For example,  given parameter vectors $\alpha$ and $\beta$, counterfactual risk evaluation seeks to evaluate $\E_{p(x|\beta)}\{ \loss(X)\}$ given simulations of $\loss(x)$ where $X$ is drawn from $p(\cdot|\alpha)$. 
Then clearly $$\E_{p(x|\beta)}\{ \loss(X) \} = \E_{p(x|\alpha)} \{ \loss(X) p(X|\beta)/p(X|\alpha) \}. $$
In this paper, we extend this setting to continuous time,  where both the loss and conditioning event are  functionals of the trajectory. Then expressions for $p(X|\beta)$ and $p(X|\alpha)$ are not available in closed form.  The classical  importance-sampling identity relies on  absolute continuity of $p(\cdot|\alpha)$ and $p(\cdot|\beta)$. 
In our continuous-time framework, however, the conditioning event is a zero-probability path
functional ($g(\Xt)=0$), 
so the ratio $p(X|\beta)/p(X|\alpha)$ is no longer meaningful. 
To address this, we exploit the known dynamics of the SDE~\eqref{eq:sde} and  replace the likelihood ratio by a Malliavin calculus representation involving Dirac delta functionals and 
Skorohod integrals, which yields a constructive estimator of the conditional expectation.

The term 'counterfactual' denotes evaluation of the conditional gradient \eqref{eq:objective} using simulated paths which may not necessarily satisfy the conditioning event $\{g(\Xt) = 0\}$ \citep{zenati2025counterfactual,swaminathan2015counterfactual, lopez2020cost}. The usefulness of our proposed approach lies in its ability to evaluate \eqref{eq:objective} under such \textit{counterfactual} conditioning, that is, without necessarily observing paths constrained to the conditioning event. This is accomplished by employing the tools of Malliavin calculus to re-write \eqref{eq:objective} as the ratio of two \textit{unconditional} expectations which can each be evaluated by paths not constrained to the event $\{g(\Xt) = 0\}$. This reformulated evaluation structure provides a significant increase in efficiency by restoring classical Monte Carlo convergence rates, not relying on kernel based methods or  conditional simulation of rare events. 

\textit{Financial applications.} This conditional (counterfactual) evaluation framework is useful for:
\begin{enumerate}[label=\roman*)]
    \item \textit{Barrier/digital and other rare-event exotics:} pricing and computing Greeks conditional on hitting (or being near) a barrier or trigger \citep{Glasserman2003}.
    \item \textit{xVA exposure and stress testing:} estimating expected exposure measures (e.g., CVA) conditional on stressed market scenarios \citep{GregoryXVAChallenge}.
    \item \textit{Tail risk under stress:} evaluating tail-risk metrics (e.g., expected shortfall) conditional on drawdown or high-volatility regimes \citep{AcerbiTasche2002}.
\end{enumerate}

We now give a concrete example: counterfactual evaluation when the asset is forced to be at a stressed level at mid-horizon. This setup leads naturally to \textit{conditional Greeks}, obtained by differentiating the resulting conditional value with respect to model parameters.

\subsection{Example. Conditional Greek Estimation}

Let $X^\theta=\{X_t^\theta\}_{t\in[0,T]}$ \eqref{eq:sde} denote the (risk-neutral) price process of a tradable asset,
parameterized by $\theta$ (e.g., volatility).

We define a path-dependent cumulative loss by
\begin{equation*}
\label{eq:running_loss_example}
\ell(X^\theta)\;:=\;\int_0^T h(X_s^\theta)\,ds,
\end{equation*}
where $h:\R_+\to\R$ is a user-chosen measurable function encoding a financial objective.
Representative choices include:
(i) downside exposure proxy $h(u)=(K-u)^+ := \max\{0,K-u\}$,
(ii) an inventory/hedging penalty proxy $h(u)=\lambda\,\Delta(u)^2$ for $\lambda>0$ and a prescribed position map $\Delta(\cdot)$,
or (iii) a simple excursion penalty $h(u)=u^2$ (or $h(u)=(\log u)^2$).

Next we impose a \emph{counterfactual stress condition} at the intermediate time $t=T/2$ by defining
\begin{equation}
\label{eq:g_mid_example}
g(X^\theta)\;:=\;\log X_{T/2}^\theta-\log s, \qquad s\in\R_+.
\end{equation}
Thus $g(S^\theta)=0$ corresponds to the event $\{X_{T/2}^\theta=s\}$, i.e., the asset price is \emph{exactly} at a stressed level $s$
halfway through the horizon. This is an \emph{anticipatory} constraint, and under mild regularity (e.g., $X_{T/2}^\theta$ admits a density), it is a \emph{zero-probability} event.

We consider the counterfactual conditional risk functional
\begin{equation*}
\label{eq:Loss_finance_example}
\Loss(\theta)\;:=\;\E\!\left[\ell(X^\theta)\,\big|\, g(X^\theta)=0\right]
\;=\;
\E\!\left[\int_0^T h(X_s^\theta)\,ds\ \Big|\ X_{T/2}^\theta=s\right],
\end{equation*}
and the associated counterfactual gradient estimation \eqref{eq:objective}.

More generally, one may wish to impose \emph{path-functional constraints} of the form
\begin{equation*}
\label{eq:g_functional_example}
g(X^\theta)\;=\;\int_0^T \gamma(X_s^\theta)\,ds \;-\; c,
\end{equation*}
for a measurable $\gamma:\R_+\to\R$ and target level $c\in\R$.
This subsumes common finance quantities, e.g., an Asian-average constraint $\gamma(u)=u$,
a time-in-drawdown constraint $\gamma(u)=\mathbf 1\{u\le K\}$, or constraints tied to integrated variance.

\subsection{Computational Intractability}

The computation of the conditional gradient \eqref{eq:objective} is hindered by two main systematic difficulties: 
\begin{enumerate}[label=\roman*)]
    \item there is generally no practical simulation or observational strategy that can \emph{enforce} $g(S^\theta)=0$ exactly
    \item methods for computing \eqref{eq:objective} without such exact conditioning are prohibitively inefficient for rare or measure-zero conditioning events. 
\end{enumerate}

To see this, we can re-express the loss for purposes of stochastic simulation as
\begin{equation}
  \label{eq:equiv_loss}
  \Loss(\th) = \frac{\E\{ \ell(X^\th) \,\delta(g(X^\th)) \}}{\E\{\delta(g(X^\th))\}   }
\end{equation}
where $\delta(g(X))$ denotes the Dirac delta centered at zero. Thus, for $L(\theta)$ to be computable by Monte Carlo, $\{g(X^{\theta})=0\}$ must have non-zero probability. This is certainly not satisfied in many applications of interest (e.g., \eqref{eq:g_mid_example}). Furthermore, even when $\PR(g(X^{\theta})=0) > 0$, methods exploiting classical stochastic simulation have complexity which is inversely proportional to $\PR(g(X^{\theta})=0)$, and thus become practically prohibitive in rare-event regimes. 

This motivates estimators and algorithms that handle conditioning on these counterfactual (often measure-zero) events without requiring constrained path simulation, and which do not suffer from prohibitive complexity in rare- or measure-zero-event regimes.


\subsection{Limitation of Kernel Methods}
Naive Monte-Carlo  estimation of  $\Loss(\th)$ in~\eqref{eq:equiv_loss} fails due to Dirac delta in  the denominator. 
The classical workaround is to use a kernel method: approximate  the Dirac delta  $\delta(g(X))$  by a kernel $\kernel_\Delta(g(X))$ where $\Delta$ denotes the kernel bandwidth. Typically  $\kernel_\Delta$ is  a multivariate Gaussian density and $\Delta$ controls its variance.
The kernel-based Monte-Carlo estimator for the loss $\Loss$ given $N$ independent realizations $X^{(i)},i=1,\ldots,N$ of $X$ is
$$ \hat{\Loss}(\th) = \frac{ \sum_{i=1}^N \ell(\th,X_{[0,T]}^{(i)}) \, \kernel_\Delta(g(X^{(i)})) }{\sum_{i=1}^N  \kernel_\Delta(g(X^{(i)}) )}. $$
But the variance of the estimate of $\Loss(\th)$ depends on the kernel bandwidth $\Delta$ and convergence becomes  excruciatingly slow for large~$n$ or small-probability events $\{g(\Xt)=0\}
$.

\subsection{Main Results} This paper  develops  a two-stage kernel free approach for counterfactual loss functional gradient (Greek) evaluation: \\ 
(i) \textit{Loss evaluation via Malliavin calculus}. We show that  
 $\loss(\Xt) \delta(g(\Xt))$ and $\delta(g(\Xt))$ in~\eqref{eq:equiv_loss} admit exact Skorohod integral representations. Their  expectation can therefore be computed using classical Monte-Carlo.
For $N$ independent trajectories generated by~\eqref{eq:sde}, the estimator
achieves uniform $O(1/N)$ variance, identical to  classical Monte-Carlo, \textit{even in rare- or measure-zero-event settings} \citep{FLL01,BET04,Cri10}.
\\
 (ii) \textit{Gradient estimation via weak derivatives}. We show that the gradient $\nabla_\th \Loss(\th)$ can be estimated efficiently using a weak derivative approach \citep{Pfl96,HV08,KV12} based on the Hahn-Jordan decomposition. The variance of the gradient estimate is $O(1)$. This is in comparison to the widely used score function estimator \citep{kleijnen1996optimization} which has variance $O(T)$.

 \subsection{Context. Counterfactual Stochastic Optimization}

By combining (i) and (ii), we obtain a counterfactual stochastic gradient algorithm that converges to a local stationary point of~$\Loss(\th)$.
The procedure is displayed in Figure~\ref{fig:stochgrad}.

\begin{figure}[h]
  \centering
  \begin{tikzpicture}[scale=0.7, transform shape,node distance = 4.5cm, auto]
    \tikzstyle{arrow} = [thick,->,>=stealth]
    \tikzset{
    block/.style={rectangle, draw, line width=0.5mm, black, text width=4em, text centered,
                 minimum height=2em},
               line/.style={draw, -latex}}
   \tikzset{
    block2/.style={rectangle, draw, line width=0.5mm, black, text width=6em, text centered,
                 minimum height=2em},
               line/.style={draw, -latex}}             

             \node[block](Markov){SDE};
  \node[block2,right of = Markov,node distance=4.2cm](sensor1){Malliavin for loss evaluation};             
  \node[block2,right of = sensor1,node distance=4.2cm](sensor){Weak Derivative};
  \node[block2,right of=sensor, node distance=5.2cm](filter){Stochastic Gradient Algorithm};
  \node[right of=filter,node distance=3.2cm](nullnode){};
  \draw[-Latex](Markov) -- node[above] {$X^{\th_n}$}  (sensor1);
  \draw[-Latex](sensor) -- node[above] {$\hat\nabla_\th \ell(X^{\th_n})$} (filter);
 \draw[-Latex](sensor1) --  (sensor);
  
  \draw[-Latex](filter) -- node[above,pos=0.55] {$\th_{n+1}$} (nullnode);
  \node[draw,inner sep=4pt,dashed,fit={(sensor1) (filter)},label={Counterfactual Stochastic Approximation   Algorithm}] {};
  \draw[-Latex] (filter.east) -- ++(1,0) |- ([yshift=-0.5cm]Markov.south)
  -|
  (Markov.south);
\end{tikzpicture}
\caption{Counterfactual Stochastic Gradient Algorithm}
\label{fig:stochgrad}
\end{figure}
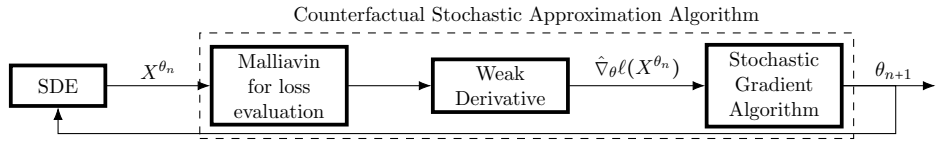

{\em Remark: Model fitting}.
 The above gradient estimation framework can be used in a stochastic optimization procedure which
 aims to choose $\th$  to control  SDE~\eqref{eq:sde} to minimize the conditional loss $\Loss(\th)$. The framework also applies to model fitting: fit 
 the SDE~\eqref{eq:sde} to $N$ externally  generated data trajectories  $\obs_{0:T}^{(1)},\ldots \obs_{0:T}^{(N)}$. In this case, one seeks to minimize the conditional loss  $L(\th) = \E\{\loss(Y,\Xt) | g(\Xt) = 0 \}$.  

 
 In classical stochastic approximation we  observe a sequence of noisy gradients $\{\nabla \loss(\th_k,\noise_k)\}$ where $\noise_k$ is a noisy signal and  $\nabla \loss(\th,\noise) $ is an asymptotically unbiased estimate of $\nabla \Loss(\th)$. We optimize
$\Loss(\th) = \E\{\ell(\th_k,\noise_k)\}$ via the stochastic gradient algorithm  
$$ \th_{k+1} = \th_k - \epsilon \nabla \ell(\th_k,\noise_k) $$
Under reasonable conditions \citep{KY03}, the interpolated trajectory of the estimate $\{\th_k\}$ converges weakly to the ordinary differential equation (ODE)
\begin{equation}
  \label{eq:ODE}
 \frac{d\th}{dt} = -\nabla L(\th) . 
\end{equation}    

\subsubsection{Passive Learning}
    In passive stochastic optimization \citep{NPT89,YY96,KY22,snow2025finite}, unlike classical stochastic gradient, we observe a sequence of noisy and {\em misspecified gradients}: $\{\alpha_k, \nabla \loss(\alpha_k,\noise_k)\}$, where the parameters $\alpha_k\in \reals^p$ are chosen randomly   according to  probability density $p(\cdot)$, potentially by an adversary.  The passive stochastic gradient algorithm is 
    \begin{equation}
      \label{eq:passive}
    \th_{k+1} = \th_k - \epsilon\,\kernel_\Delta(\alpha_k-\th_k )\,\nabla \ell(\alpha_k,\noise_k) . 
  \end{equation}
  The kernel $\kernel_{\Delta}(\cdot)$ weights the usefulness of the gradient $\nabla\loss(\alpha_k,\noise_k)$ compared to the required gradient
$\nabla\loss(\th_k,\noise_k)$.
If $\th_k$ and $\alpha_k$ are far apart, then kernel is   small and  only a small proportion of the gradient estimate $\nabla\loss(\alpha_k,\noise_k)$ is added to the stochastic gradient algorithm. On the other hand, if $
\alpha_k = \th_k$,
the algorithm  becomes the  classical  stochastic gradient algorithm. 
Under reasonable conditions, for small bandwidth parameter $\Delta$, the kernel $\kernel_\Delta$ behaves as a Dirac delta and   the interpolated trajectory converges weakly to the ODE
$$\frac{d \th}{dt} = -\int_\Theta \pi(\theta) \,\delta(\alpha- \th)\, \nabla L(\alpha)\, d\alpha  =-p(\th) \, \nabla L(\th).
$$
Notice that this ODE has the same fixed points as~\eqref{eq:ODE}.

\subsubsection{Counterfactual Learning}
Finally, the counterfactual stochastic optimization problem described above, can be regarded as a passive stochastic optimization problem.  At each stage, we require gradient estimates  
$\nabla_\th \loss(\th,g(\Xt)=0,\noise) $ that are unbiased estimates of
$\nabla_\th \Loss(\th) $ where $\Loss(\th)$ is defined in~\eqref{eq:objective}, but we are instead provided with  noisy and misspecified gradient estimates 
$\nabla_\th \loss(\th,g(\Xt)=a,\noise) $ for random  $a \in \reals$. That is, while
the desired gradient corresponds to the counterfactual constraint
  $[\th,g(\Xt)=0]$,  we only observe  the misspecified noisy gradient evaluated at
$\alpha = [\th,g(\Xt)=a]$.  Therefore, one can use the passive kernel based algorithm~\eqref{eq:passive} to solve the counterfactual stochastic optimization problem.  However, in this paper, we will exploit the structure of the SDE~\eqref{eq:sde} and not use the kernel based algorithm.

\section{Malliavin Calculus Approach to Estimate Conditional Loss}

Malliavin calculus \citep{Nua06} was  developed in the 1970s as a probabilistic method to prove H\"ormander hypoellipticity theorem for the solution of SDEs.  It was later adapted in mathematical finance to compute sensitivities (Greeks) of option prices.  Here, as in \citep{FLL01,BET04,Cri10}, we employ  Malliavin calculus  to \textit{efficiently} evaluate the conditional loss $\E\{\loss(\Xt)|g(\Xt)=0\}$, even when the conditioning event $\{g(\Xt)=0\}$ has vanishing or zero probability. 


\subsection{Preliminaries. Malliavin Calculus.}
\label{sec:mall_pre}

We briefly recall the two central objects.

\subsubsection{Malliavin derivative}
 We work on the probability space $(\Omega,\mathcal{F},\mathbb{P})$ with  a $d$-dimensional Brownian motion $W = (W^1,\dots,W^d)$ and the natural filtration $\{\CF_t\}_{t\geq0}$. 
For a smooth functional $F$ of $W$, the \emph{Malliavin derivative} $D_t F$ is defined as the process measuring 
the infinitesimal sensitivity of $F$ to perturbations of the Brownian path at time $t$. Formally, 
for cylindrical random variables of the form
\[
F = f\bigg( \int_0^T h_1(s)\, dW_s, \dots, \int_0^T h_n(s)\, dW_s \bigg),
\]
with $f \in C_b^\infty(\mathbb{R}^n)$ and $h_i \in L^2([0,T];\mathbb{R}^d)$, the derivative is
\[
D_t F = \sum_{i=1}^n \frac{\partial f}{\partial x_i}\bigg( \int_0^T h_1\, dW, \dots, \int_0^T h_n\, dW \bigg)\, h_i(t).
\]
The closure of this operator in $L^p$ leads to the Sobolev space $\mathbb{D}^{1,p}$ of Malliavin differentiable random variables.

\subsubsection{Skorohod integral}
The adjoint of the Malliavin derivative is the \emph{Skorohod integral}, denoted $\skor(u)$.  Indeed, for a process $u \in L^2([0,T]\times \Omega;\mathbb{R}^d)$, $u$ is in the domain of $\skor$ if there exists a square-integrable 
random variable $\skor(u)$ such that for all $F \in \mathbb{D}^{1,2}$,
\begin{equation}
\label{eq:adjoint}
\E[F \, \skor(u)] = \E\!\left[\int_0^T \langle D_t F, u_t \rangle \, dt \right].
\end{equation}
The above adjoint relationship  serves as the definition of the Skorohod integral and can be written abstractly as  $$\langle F, \mathcal{S}(u) \rangle_{L^2(\Omega)} =
\langle D F, u \rangle_{L^2([0,T] \times \Omega)}.$$

When $u$ is adapted to the filtration $\{\CF_t\}_{t\geq0}$, the Skorohod integral $\skor(u)$ coincides with the Itô integral $\int_0^T u_t\, dW_t$. 
In general, $\skor(u)$ extends stochastic integration to non-adapted processes and is sometimes called the \emph{divergence operator}.

\subsubsection{Integration by parts}
The duality relation~\eqref{eq:adjoint}  yields the Malliavin’s integration-by-parts formula, which underpins many applications, 
including Monte Carlo estimation of conditional expectations and sensitivity analysis for SDEs 
(see \citep{Nua06,FLL01,BET04}). 

\subsubsection{Computing Malliavin Derivative and Skorohod Integral} 
\label{sec:mall_comp}
The following properties are the key tools which allow us to compute the Malliavin derivative and Skorohod integral: 
\begin{enumerate}
    \item \textit{Malliavin derivative of diffusion}. For  diffusion process $\{X_t\}_{t\geq 0}$ \eqref{eq:sde}, the  Malliavin derivative $D_sX_t$ is \citep{gobet2005sensitivity}
    \begin{equation}
    \label{eq:mall_form}
        D_sX_t = Y_tZ_s\sigma(X_s,s)\1_{s\leq t}
    \end{equation}
    where  $Y_t:= \nabla_x X_t$ is the Jacobian matrix and $Z_t$ is its inverse $Z_t := Y_t^{-1}$.  This, together  with the Malliavin chain rule, facilitates evaluating  Malliavin derivatives of general functions of diffusions.
    \item \textit{Skorohod expansion}. For random variable $F\in \mathbb{D}^{1,2}$ and Skorohod-integrable process $u$, we have \citep[eq. 2.2]{gobet2005sensitivity}:
    \begin{equation}
    \label{eq:skor_exp}
    \skor(F u) = F\skor(u) - \int_0^TD_tF\cdot u_t dt
    \end{equation}
    In general,  the Skorohod integrand $\{u_t\}_{t\in[0,T]}$ of interest is non-adapted. However, in the special case where $u$ factorizes into  the product of an adapted process $\hat{u}=\{\hat{u}_t\}_{t\in[0,T]}$ and an anticipative random variable $F$, this formula gives a constructive expression.  Specifically, we can expand $\skor(u) = \skor(F\hat{u}) $ using  \eqref{eq:skor_exp} and compute it in terms of a  standard It\^o integral of the adapted part $\hat{u}$ together with the  Malliavin derivative of the anticipatory random variable $F$. 
    
\end{enumerate}

\subsection{Malliavin Calculus Expression for Conditional Expectation}
The following main result expresses   the conditional expectation \eqref{eq:objective} as the ratio of unconditional expectations. 
\begin{theorem}  Assume
$\ell(\Xt), g(\Xt) \in L^2(\Omega)$ and  $D_t\ell(\Xt), D_tg(\Xt) \in L^2(\Omega \times [0,T])$. Then  the conditional loss $\Loss$ in~\eqref{eq:equiv_loss} is
\begin{align}
\begin{split}
\label{eq:malliavin}
 &\Loss(\th)  =\E[ \ell(X^\th)  \mid g(X^\th) = 0]  =  \frac{E_1^\th}{E_2^\th} \\
   &\text{ where } \\&E_1^\th 
 = \E\bigg[ \ones_{\{g(\Xt)>0\}}\left(\ell(X^\th) \skor(u)  - 
  \int_0^T (D_t \ell(\Xt)) u_t dt\right) \bigg]\\  &E_2^\th =
\E[  \ones_{\{g(\Xt)>0\}} \skor(u)]
\end{split}
\end{align}
Here 
$u$ is any process that satisfies  
\begin{equation}
\label{eq:ut_cond}
\E[\int_0^T D_tg(\Xt) u_t\,dt] = 1
\end{equation}
\end{theorem}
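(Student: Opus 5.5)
The plan is to turn both the numerator and the denominator of \eqref{eq:equiv_loss} into expectations that involve only the indicator $\ones_{\{g(\Xt)>0\}}$. The tools are the Malliavin chain rule and the duality \eqref{eq:adjoint}. At the very end I compare what comes out with the normalisation \eqref{eq:ut_cond}. Throughout, write $G=g(\Xt)$, $\ell=\ell(\Xt)$ and $\Gamma := \int_0^T D_tG\,u_t\,dt$.

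\emph{Step 1 (regularised chain rule).} The indicator is not in $\mathbb{D}^{1,2}$, so I would work with a smooth approximation. Take $H_\epsilon(x)=\int_{-\infty}^x \phi_\epsilon(y)\,dy$ with $\phi_\epsilon$ a mollifier. Then $H_\epsilon(G)\in\mathbb{D}^{1,2}$ by the hypotheses on $g$, and $D_tH_\epsilon(G)=\phi_\epsilon(G)\,D_tG$. Multiplying by $\ell u_t$, integrating over $[0,T]$ and applying \eqref{eq:adjoint} with $F=H_\epsilon(G)$ gives
\[
\E\big[\phi_\epsilon(G)\,\ell\,\Gamma\big]=\E\big[H_\epsilon(G)\,\skor(\ell u)\big].
\]
\emph{Step 2 (expansion).} Next I apply \eqref{eq:skor_exp} with $F=\ell$, which is allowed because $D_t\ell\in L^2(\Omega\times[0,T])$. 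This gives $\skor(\ell u)=\ell\,\skor(u)-\int_0^T D_t\ell\,u_t\,dt$, which is exactly the integrand of $E_1^\th$. Taking $\ell\equiv1$ in the same computation gives $\E[\phi_\epsilon(G)\Gamma]=\E[H_\epsilon(G)\skor(u)]$.

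\emph{Step 3 (limit $\epsilon\to0$).} As $\epsilon\to0$, $H_\epsilon(G)\to\ones_{\{G>0\}}$ in $L^2$, assuming $G$ has no atom at $0$, which holds when $G$ has a density. Dominated convergence then sends the right-hand sides to $E_1^\th$ and $E_2^\th$. On the left-hand sides, $\E[\phi_\epsilon(G)Y]\to p_G(0)\,\E[Y\mid G=0]$ for $Y\in\{\ell\Gamma,\Gamma\}$, by continuity of the conditional law at $0$; this is the rigorous meaning of $\E[Y\delta(G)]$. Therefore
\[
\frac{E_1^\th}{E_2^\th}=\frac{\E[\ell\,\Gamma\mid G=0]}{\E[\Gamma\mid G=0]} .
\]

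\emph{Step 4 (the main obstacle).} It remains to show that this ratio equals $\E[\ell\mid G=0]$, and this is where \eqref{eq:ut_cond} has to be used. The hypothesis $\E[\Gamma]=1$ is an \emph{unconditional} normalisation. It removes $\Gamma$ from the ratio only if $\Gamma$ and $\ell$ are conditionally uncorrelated given $G=0$, that is $\E[\ell\Gamma\mid G=0]=\E[\ell\mid G=0]\,\E[\Gamma\mid G=0]$, and if $\E[\Gamma\mid G=0]\neq0$. My first attempt would be to exploit the remaining freedom in $u$: decompose $u=\bar u+v$ with $\bar u$ chosen so that $\int_0^T D_tG\,\bar u_t\,dt$ is a constant (for example $\bar u_t = c\,D_tG/\|DG\|^2$). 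Then the contribution of $v$ must be shown to cancel in the ratio. I expect this step may fail for arbitrary $u$. A random $\Gamma$ that is correlated with $\ell$ on $\{G=0\}$ would bias the ratio, and rescaling $u$ to enforce $\E[\Gamma]=1$ cannot fix that. So the proof should isolate Step 4 as the precise additional condition, namely that $\Gamma$ is conditionally constant, or at least uncorrelated with $\ell$, given $G=0$. The assertion should then be verified under that condition, and the expectation form \eqref{eq:ut_cond} is what remains when $\Gamma$ is deterministic.
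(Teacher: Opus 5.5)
\textbf{Verdict.} Your proposal does not prove the statement as written, but that is because the statement is false as written. Your Step 4 objection is correct and pinpoints a genuine gap in the paper's own proof.

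\textbf{Steps 1--3 versus the paper.} These are the paper's argument, done more carefully. Its proof outline also combines the chain rule for $\ones_{\{G>0\}}$, the duality \eqref{eq:adjoint} and the expansion \eqref{eq:skor_exp}. It just works formally with $\delta(G)$ rather than a mollifier. Its displayed middle term should read $\E[\int_0^T \ell\,D_t(\ones_{\{G>0\}})\,u_t\,dt]$, which is the limit of your $\E[\phi_\epsilon(G)\,\ell\,\Gamma]$. With $D_t(\ell\,\ones_{\{G>0\}})$ as printed, duality would give $\E[\ones_{\{G>0\}}\ell\,\skor(u)]$ with no correction term.

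\textbf{Where the paper's proof fails.} What the argument actually yields is $E_1^\th=\E[\ell\,\Gamma\,\delta(G)]$ and $E_2^\th=\E[\Gamma\,\delta(G)]$. The paper then invokes \eqref{eq:ut_cond} to replace $\Gamma$ by $1$. That is exactly your Step 4, and $\E[\Gamma]=1$ does not justify the replacement. So the gap is in the paper's proof, and no decomposition $u=\bar u+v$ can rescue the claim for arbitrary admissible $u$.

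\textbf{A counterexample.} Rather than hedging (``I expect this step may fail''), settle the question with an explicit example. Take $X=W$ (one-dimensional, $b=0$, $\sigma=1$), $T=1$, $G=W_1$, $\ell=W_{1/2}^2$, and $u_t\equiv F:=1+\lambda\big(W_{1/2}^2-\tfrac12\big)$. This $u$ is anticipating but lies in the domain of $\skor$, with $\skor(u)=FW_1-\lambda W_{1/2}$. Also $\Gamma=F$, so $\E[\Gamma]=1$ and \eqref{eq:ut_cond} holds. Since $W_{1/2}\mid\{W_1=0\}\sim\mathcal N(0,\tfrac14)$, your Step 3 (or a direct Gaussian computation) gives
\[
E_1^\th=\tfrac{1}{\sqrt{2\pi}}\Big(\tfrac14+\tfrac{\lambda}{16}\Big),\qquad E_2^\th=\tfrac{1}{\sqrt{2\pi}}\Big(1-\tfrac{\lambda}{4}\Big).
\]
For $\lambda=1$ the ratio is $5/12\neq 1/4=\E[\ell\mid G=0]$. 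For $\lambda=4$ one gets $E_2^\th=0\neq E_1^\th$, so the ratio is not even defined.

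\textbf{The correct hypothesis.} What is needed is $\int_0^T D_tG\,u_t\,dt=1$ a.s., or more generally $\E[\Gamma\mid \ell,G]=1$ a.s., together with $p_G(0)>0$. This is what the paper's Remark (i) choice of $u$ and its Black--Scholes choice $u_t=\frac{2}{\th T}\ones_{\{t\le T/2\}}$ actually satisfy. Under that hypothesis, $\E[\phi_\epsilon(G)\,\ell\,\Gamma]=\E[\phi_\epsilon(G)\,\ell]$ for every $\epsilon$, and your Steps 1--3 complete the proof with no further work.
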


\textbf{Proof outline}: We start with \eqref{eq:equiv_loss} and write $\delta(g(\Xt))$ as $\delta(G)$.
Then, by the Malliavin chain rule, the adjoint relation \eqref{eq:adjoint} and the Skorohod integrand condition \eqref{eq:ut_cond}, we have
\begin{align*}
&\E[\ell(\Xt)\,\delta(G)] \\&
=\E\!\left[\int_0^T (D_t(\ell(\Xt))\1_{\{g(\Xt)>0\}}))\,u_t\,dt\right]
\\&= \E\!\Big[\1_{\{g(\Xt)>0\}}\left(\ell(\Xt)\skor(u)-
\int_0^T (D_t\ell(\Xt))u_t\,dt\right)\Big].
\end{align*}
The denominator in \eqref{eq:equiv_loss} can be derived similarly. 

\paragraph*{Remarks.}
(i) There is considerable flexibility  in the choice of $u$ in the above theorem.
For example if $D_t g(\Xt) \neq 0$ a.e., one can choose
\begin{equation}
\label{eq:u_choice}
u_t = \begin{cases} \frac{1}{T D_t g(\Xt)}  & D_t g(\Xt) \neq 0 \\
     1 &  D_t g(\Xt) = 0 .
     \end{cases}
\end{equation}

(ii) 
The 
representation~\eqref{eq:malliavin} requires evaluation of Malliavin derivatives and Skorohod integrals, see \citep{gobet2005sensitivity} for several examples. There are  several important consequences.  First, it restores the
$N^{-1/2}$ Monte--Carlo convergence rate even under singular conditioning, as
the event $\{g(X^\theta)=0\}$ no longer needs to be sampled directly.  Second,
the estimator admits substantial variance--reduction flexibility: the choice of
localizing function (indicator versus smooth approximation) and of admissible
weight process $u$ strongly influence efficiency, with optimal choices
characterizable via variational principles in Malliavin calculus.  Third, the
representation is compatible with standard discretizations of the forward SDE:
the Malliavin derivatives $D_t X^\theta$ admit recursive Euler--Maruyama
approximations, so one avoids additional kernel bandwidths or curse--of--dimensionality
issues inherent in regression--based methods.

\section{Weak Derivative Estimator}

Applying the quotient rule, 
it follows  from~\eqref{eq:malliavin} that
\begin{equation}
\label{eq:mall_grad}
\nabla_\theta \E[\ell(X^\theta)\mid g(\Xt) = 0]
= \frac{E_2\,\nabla_\theta E_1 - E_1\,\nabla_\theta E_2}{E_2^2}.
\end{equation}
In this section we construct a weak derivative based algorithm to estimate $\nabla_\th E_1$ and
$\nabla_\th E_2$ given the SDE~\eqref{eq:sde}. The resulting gradient estimate can then be fed into a stochastic gradient algorithm to minimize the loss $\Loss(\th)$. This weak-derivative method recovers an $O(1)$ variance scaling with respect to the time horizon $T$, in contrast to score function methods which incur $O(T)$ variance scaling.

\subsection{Discrete-Time Measure-Valued Weak Derivative}
The purpose of this section is to derive a computationally feasible weak-derivative estimator for the gradient \eqref{eq:objective}. We begin with the Euler discretization of the SDE \eqref{eq:sde}, and derive a weak derivative of the time-inhomogeneous Markov kernel governing its dynamics.

For practical computation, we work with an Euler discretization of the SDE~\eqref{eq:sde} on a grid $t_k = k \Delta t$, $k=0,\dots,M$, with $\Delta t = T/M$. Let $\Sigma(x,t) := \sigma_{\theta}(x,t)\sigma_{\theta}(x,t)^\top$. The resulting (time-inhomogeneous) Markov chain
\[
X_{k+1}^\theta = X_k^\theta + \Delta t\,b_\theta(X_k^\theta,t_k) 
               + \sqrt{\Delta t}\,\sigma_{\theta}(X_k^\theta,t_k)\,\xi_{k+1}, 
\qquad \xi_{k+1}\sim\mathcal N(0,I_d),
\]
has one-step transition kernels
\begin{align}
\begin{split}
\label{eq:gauskern}
P_{\Delta t,k}^\theta(x,t_k,dx')
= \mathcal N\big(x+\Delta t\, b_\theta(x,t_k),\,\Delta t\,\Sigma(x,t_k)\big).
\end{split}
\end{align}
For each time index $k$ and parameter $\theta$, $P_{\Delta t,k}^\theta(x,t_k,\cdot)$ is a probability measure on $\reals^n$. Its parameter derivative is not a probability measure in general, but a \emph{finite signed measure}. More precisely, if
\[
\nabla_{\theta} P_{\Delta t,k}^\theta(x,t_k,\cdot)
\]
exists in the weak sense\footnote{To keep the notation simple and avoid multidimensional matrices, we assume $\theta$ is a scalar parameter. Dealing with $\th \in\reals^p$ simply amounts to interpreting the results elementwise.} then for every bounded, smooth test function $f$,
\begin{equation}
\label{eq:wd}
\nabla_{\theta} P_{\Delta t,k}^\theta f(x,t_k) 
\;=\; \nabla_{\theta}\int_{\reals^n} f(x') \, P_{\Delta t,k}^\theta(x,t_k,dx').
\end{equation}
We call~\eqref{eq:wd} the \emph{weak derivative} (or \emph{measure-valued derivative}) of $P_{\Delta t,k}^\theta$ at time $t_k$.

By the Hahn–Jordan decomposition theorem, any finite signed measure $\nu$ on a measurable space can be expressed as the difference of two mutually singular positive measures: $\nu = \nu^+ - \nu^-$, with $\nu^+$ and $\nu^-$ uniquely determined. Applying this to the weak derivative
$\nabla_{\theta} P_{\Delta t,k}^\theta(x,t_k,\cdot)$ at each time step, we obtain a pointwise decomposition
\[
\nabla_{\theta} P_{\Delta t,k}^\theta(x,t_k,dx') 
= c_{\theta,k}(x,t_k)\big(\rho^+_{\theta,k}(x,t_k,dx') - \rho^-_{\theta,k}(x,t_k,dx')\big),
\]
where $\rho^\pm_{\theta,k}(x,t_k,\cdot)$ are probability measures (the normalized Hahn–Jordan components) and $c_{\theta,k}(x,t_k)$ is a finite scalar weight.

We formalize this as follows.

\begin{theorem}[Discrete-Time Hahn--Jordan Weak Derivative]
\label{thm:hj_discrete}
Let $(X_t^\theta)_{t\in[0,T]} \subset \reals^n$ solve the Itô SDE \eqref{eq:sde} with 
$b_\theta \in C_b^2(\reals^n\times \reals;\reals^n)$ and 
$\sigma_{\theta} \in C_b^2(\reals^n \times \reals;\reals^{n\times d})$, 
where $d$ is the dimension of the driving Brownian motion.  

For $\Delta t > 0$, the Euler--Maruyama scheme induces the Gaussian transition probabilities \eqref{eq:gauskern} with time-inhomogeneous kernels $\{P_{\Delta t,k}^\theta\}_{k=0}^{M-1}$. Then, for each $k$ and $(x,t_k)$, the weak derivative of $P^\theta_{\Delta t,k}$ with respect to~$\theta$ admits a Hahn--Jordan decomposition
\begin{equation}
\label{eq:hjd}
\nabla_{\theta} P^\theta_{\Delta t,k}(x,t_k,dx') 
= c_{\theta,k}(x,t_k)\big(\rho^+_{\theta,k}(x,t_k,dx') - \rho^-_{\theta,k}(x,t_k,dx')\big),
\end{equation}
where $\rho^\pm_{\theta,k}(x,t_k,\cdot)$ are mutually singular positive measures and $c_{\theta,k}(x,t_k)\in\reals$ is finite. Consequently, for any bounded measurable $f:\reals^n\to\reals$,
\begin{align*}
&\nabla_{\theta} \int f(x')\,P^\theta_{\Delta t,k}(x,t_k,dx')\\
&\quad = c_{\theta,k}(x,t_k)\left(\int f(x')\,\rho^+_{\theta,k}(x,t_k,dx') 
             - \int f(x')\,\rho^-_{\theta,k}(x,t_k,dx')\right).
\end{align*}
\end{theorem}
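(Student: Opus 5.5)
The argument is a direct computation with the explicit Gaussian density of the Euler kernel \eqref{eq:gauskern}, followed by the Hahn--Jordan decomposition of the resulting signed measure.

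\textbf{Step 1: the kernel's density and its derivative.} Fix $k$ and $(x,t_k)$. I would first assume, as is implicit for the kernel to have a density, that $\Sigma(x,t_k)$ is nondegenerate; this means $n\leq d$ and $\sigma_\theta\sigma_\theta^\top$ is positive definite. Then $P^\theta_{\Delta t,k}(x,t_k,dx') = p_\theta(x')\,dx'$, where $p_\theta$ is the $\mathcal N(m_\theta, S_\theta)$ density with
\[
m_\theta = x+\Delta t\, b_\theta(x,t_k), \qquad S_\theta=\Delta t\,\Sigma(x,t_k).
\]
Since $b_\theta,\sigma_\theta\in C^2_b$ are smooth in $\theta$, the map $\theta\mapsto p_\theta(x')$ is $C^1$. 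Its derivative is
\[
\partial_\theta p_\theta(x') = p_\theta(x')\Big[\langle S_\theta^{-1}(x'-m_\theta),\partial_\theta m_\theta\rangle + \tfrac12\operatorname{tr}\big(S_\theta^{-1}\partial_\theta S_\theta(S_\theta^{-1}(x'-m_\theta)(x'-m_\theta)^\top - I)\big)\Big].
\]
This is a Gaussian density times a quadratic polynomial in $x'$ with bounded coefficients, so it is dominated locally uniformly in $\theta$ by an integrable function.

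\textbf{Step 2: differentiating under the integral.} By dominated convergence, for every bounded measurable $f$,
\[
\nabla_\theta\int f\,dP^\theta = \int f(x')\,\partial_\theta p_\theta(x')\,dx'.
\]
Hence the weak derivative \eqref{eq:wd} exists and equals the finite signed measure $\nu(dx')=\partial_\theta p_\theta(x')\,dx'$. Its total variation is $\int|\partial_\theta p_\theta|<\infty$.

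\textbf{Step 3: Hahn--Jordan decomposition.} Write
\[
\nu = (\partial_\theta p_\theta)^+dx' - (\partial_\theta p_\theta)^-dx'.
\]
The two parts are supported on the disjoint sets $\{\partial_\theta p_\theta>0\}$ and $\{\partial_\theta p_\theta<0\}$, so they are mutually singular. Differentiating $\int p_\theta =1$ gives $\nu(\reals^n)=0$, so both parts have the same mass
\[
c_{\theta,k}(x,t_k) = \tfrac12\int|\partial_\theta p_\theta(x')|\,dx' <\infty .
\]
If $c_{\theta,k}>0$, set $\rho^\pm_{\theta,k} = (\partial_\theta p_\theta)^\pm dx'/c_{\theta,k}$. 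If $c_{\theta,k}=0$, take $\rho^\pm$ to be any two mutually singular probability measures. This gives \eqref{eq:hjd}, and the stated identity for bounded measurable $f$ follows by integrating $f$ against \eqref{eq:hjd} and using Step 2.

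\textbf{Main obstacle.} The delicate point is the justification in Step 2 for merely bounded measurable $f$, together with possible degeneracy of $\Sigma$. For the domination I would use uniform ellipticity of $\Sigma$ near $\theta$, which gives a bound of the form $C(1+|x'|^2)e^{-c|x'-m|^2}$.

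If $\Sigma$ is degenerate, the kernel has no Lebesgue density. In that case I would write $X_{k+1}=m_\theta + \sqrt{\Delta t}\,\sigma_\theta\xi$ and differentiate with respect to $\theta$ through the density of $\xi$ on $\reals^d$, working on the pushforward. Alternatively I would restrict the statement to the uniformly elliptic case, which I expect is the intended setting.
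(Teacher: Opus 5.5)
Your proposal is correct and follows the paper's own route: differentiate the Gaussian Euler density under the integral, take the Hahn--Jordan decomposition of the resulting signed measure, and use $\nu(\reals^n)=0$ to obtain a single weight $c_{\theta,k}$. It is in fact more careful than the paper's proof, which only checks $C_c^\infty$ test functions, never identifies $c_{\theta,k}=\tfrac12\int|\partial_\theta p_\theta|$, and tacitly assumes a density exists.

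Two caveats apply equally to the paper. First, the stated $C^2_b$ hypothesis is in $(x,t)$, so the $C^1$ dependence of $b_\theta,\sigma_\theta$ on $\theta$ that you use is an extra assumption. Second, nondegeneracy of $\Sigma$ is genuinely necessary and cannot be removed by your pushforward idea: if $\sigma_\theta(x,t_k)=0$ the kernel is $\delta_{m_\theta}$, whose $\theta$-derivative is a dipole rather than a signed measure whenever $\partial_\theta m_\theta\neq 0$, so restricting to the elliptic case is the right choice.
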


\begin{proof}
Since $P^\theta_{\Delta t,k}$ is Gaussian with mean 
$\mu_\theta(x,t_k) = x + \Delta t\,b_\theta(x,t_k)$ and covariance 
$\Sigma(x,t_k) = \Delta t\,a(x,t_k)$, the density $p_\theta(x,t_k,x')$ is smooth in $\theta$ by 
the $C^2_b$ assumption.\footnote{$C_b^2$ denotes twice-differentiable bounded functions, and $C_c^{\infty}$ denotes infinitely-differentiable functions with compact support.} For any $\phi \in C_c^\infty(\reals^n)$,
\[
\nabla_{\theta} \int \phi(x')\,P^\theta_{\Delta t,k}(x,t_k,dx') 
= \int \phi(x')\, \nabla_{\theta} p_\theta(x,t_k,x')\,dx',
\]
so $\nabla_{\theta} P^\theta_{\Delta t,k}(x,t_k,\cdot)$ defines a finite signed measure. 
By the Hahn--Jordan decomposition theorem, every finite signed measure admits 
a unique decomposition into two mutually singular positive measures 
$\nu^+$ and $\nu^-$. Normalizing these gives probability measures $\rho^+_{\theta,k}$ and $\rho^-_{\theta,k}$ and a scalar weight $c_{\theta,k}(x,t_k)$ so that~\eqref{eq:hjd} holds. Finally, since
\[
\int\nabla_{\theta}P_{\Delta t,k}^{\theta}(x,t_k,dx') 
= \nabla_{\theta}\int P_{\Delta t,k}^{\theta}(x,t_k,dx') = 0,
\]
\end{proof}

\begin{figure}
  \centering
  \includegraphics[scale=0.3]{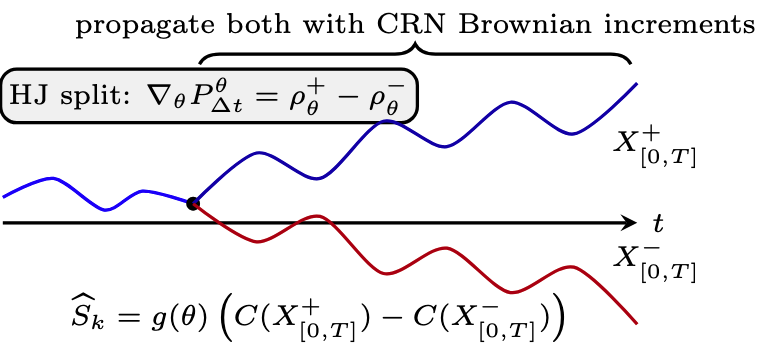}
  \caption{Conceptual schematic of the Hahn--Jordan decomposition for the weak derivative estimator. At each time step, the derivative of the Euler kernel is represented as a signed difference of two branched transitions driven by common random numbers (CRN).}
\end{figure}

\subsection{Finite-Horizon Splitting Representation}

In this section we derive a representation of the weak derivative as a cumulative difference of weighted paths which are split at each time-point \citep{heidergott2000measure}. This provides an implementable algorithm for computing weak derivatives of functionals of \textit{non-stationary} processes, which is necessary for most applications. 

Let $\mu$ denote an arbitrary initial distribution for $X_0^\theta$, and let $\Xt = (X_0^\theta,\dots,X_M^\theta)$ denote the Euler path. For a bounded path-functional $\wloss:\reals^{(M+1)n}\to\reals$ we consider
\[
J(\theta) := \E_\mu[\wloss(\Xt)].
\]
Measure-valued differentiation provides a product rule for the family of kernels $\{P_{\Delta t,k}^\theta\}_{k=0}^{M-1}$: the derivative of the finite product of kernels can be expanded as a sum over all discrete times at which the derivative acts. Concretely, one obtains the representation
\begin{equation}
\label{eq:mvd-product}
\nabla_\theta J(\theta)
= \sum_{k=0}^{M-1} 
\E_\mu\Big[
  c_{\theta,k}(X_k^\theta,t_k)
  \big(\wloss(X^{\theta,+,k}_{0:M}) - \wloss(X^{\theta,-,k}_{0:M})\big)
\Big],
\end{equation}
where, for each $k$, the \emph{phantom chains}\citep{heidergott2000measure} $X^{\theta,\pm,k}_{0:M}$ are defined as follows:
\begin{itemize}
\item[-] For $i \le k$,
\(
X^{\theta,\pm,k}_i = X^\theta_i
\)
(i.e., both phantoms and the nominal chain coincide up to time $t_k$).
\item[-] At time $t_{k+1}$ we branch using the Hahn--Jordan components:
\[
X^{\theta,+,k}_{k+1} \sim \rho^+_{\theta,k}(X_k^\theta,t_k,\cdot), 
\qquad
X^{\theta,-,k}_{k+1} \sim \rho^-_{\theta,k}(X_k^\theta,t_k,\cdot),
\]
conditionally on $X_k^\theta$.
\item[-] For $i\ge k+1$, both phantoms evolve under the nominal kernel $P_{\Delta t,i}^\theta$, using the same Gaussian increments (CRN coupling) as each other.
\end{itemize}
Equation~\eqref{eq:mvd-product} is the discrete-time analogue of splitting the signed derivative at every point in time: at each step $k$, the derivative of the kernel injects a signed mass $c_{\theta,k}(\cdot)$ which is propagated forward through the remaining dynamics.

\subsection{Single-Run Hahn--Jordan Weak-Derivative Estimator}
\label{sec:HJWD}

The representation~\eqref{eq:mvd-product} suggests a direct Monte Carlo implementation that would, in principle, branch at every time step $k$ and average the contributions. This \emph{phantom} construction is conceptually useful but computationally expensive, since the number of branched paths grows with $M$. In practice, we follow the single-run MVD paradigm and realize~\eqref{eq:mvd-product} with only one branch per simulated trajectory.

Let $q$ be a probability mass function on $\{0,\dots,M-1\}$ (for example, $q_k = 1/M$). For each Monte Carlo replication:
\begin{enumerate}
\item Simulate a nominal Euler path $(X_0^\theta,\dots,X_M^\theta)$ starting from $X_0^\theta\sim\mu$.
\item Draw a branch index $K\in\{0,\dots,M-1\}$ with $\mathbb{P}(K=k) = q_k$.
\item At time $t_K$, form the local Hahn--Jordan decomposition 
\(
\nabla_\theta P_{\Delta t,K}^\theta(X_K^\theta,t_K,\cdot)
= c_{\theta,K}(X_K^\theta,t_K)\big(\rho^+_{\theta,K} - \rho^-_{\theta,K}\big)
\)
and sample
\[
X_K^{\theta,(+)}\sim\rho^+_{\theta,K}(X_K^\theta,t_K,\cdot),\qquad
X_K^{\theta,(-)}\sim\rho^-_{\theta,K}(X_K^\theta,t_K,\cdot).
\]
\item Generate future Gaussian increments $\{\xi_j\}_{j=K+1}^M$ and reuse them for both branches:
propagate $X^{\theta,(+)}$ and $X^{\theta,(-)}$ from $t_{K+1}$ to $T$ using the same noise sequence (CRN coupling),
thereby obtaining branched paths $X^{\theta,(+)}_{0:M}$ and $X^{\theta,(-)}_{0:M}$ that coincide with the nominal path up to $t_K$.
\end{enumerate}

\begin{algorithm}[H]\small
\caption{Single–run HJ weak-derivative estimator (Euler kernels, non-stationary initial law $\mu$)}
\label{alg:hjd}
\begin{algorithmic}[1]
\Require horizon $T$, steps $M$, $\Delta t=T/M$, parameter $\theta$, branch law $q$, path functional $\wloss$
\State Sample $X_0^\theta\sim\mu$ and simulate the Euler path $X_0^\theta,\dots,X_M^\theta$.
\State Sample branch index $K\in\{0,\dots,M-1\}$ with $\mathbb{P}(K=k)=q_k$.
\State Form $c_{\theta,K}(X_K^\theta,t_K)$ and $\rho_{\theta,K}^\pm(X_K^\theta,t_K,\cdot)$ from~\eqref{eq:hjd}.
\State Draw $X_K^{\theta,(+)}\sim\rho^+_{\theta,K}(X_K^\theta,t_K,\cdot)$, $X_K^{\theta,(-)}\sim\rho^-_{\theta,K}(X_K^\theta,t_K,\cdot)$.
\State Generate $\{\xi_j\}_{j=K+1}^M$ and reuse them for both branches (CRN), propagating to obtain $X^{\theta,(+)}_{0:M}$ and $X^{\theta,(-)}_{0:M}$.
\State Return the single-run estimator
\begin{equation}
\label{eq:s_k}
    \widehat S 
    \;=\; \frac{c_{\theta,K}(X_K^\theta,t_K)}{q_K}
          \Big(\wloss(X^{\theta,(+)}_{0:M})-\wloss(X^{\theta,(-)}_{0:M})\Big).
\end{equation}
\end{algorithmic}
\end{algorithm}

\textbf{Remarks.} By construction and~\eqref{eq:mvd-product}, $\E_\mu[\widehat S]=\nabla_\theta J(\theta)$ for any choice of $q$ with full support on $\{0,\dots,M-1\}$. Thus Algorithm~\ref{alg:hjd} realizes the \emph{split-at-every-point} measure-valued derivative in a single-branch Monte Carlo estimator. The CRN coupling ensures that the difference
$\wloss(X^{\theta,(+)}_{0:M})-\wloss(X^{\theta,(-)}_{0:M})$ is driven by a localized perturbation at $t_K$ and remains $O(1)$ in $T$ under mild ergodicity assumptions, yielding an $O(1)$ variance scaling in the time horizon.

\subsubsection{Malliavin Gradient Estimation}

We now connect the weak-derivative estimator to the Malliavin-gradient representation~\eqref{eq:mall_grad}. For the counterfactual stochastic gradient algorithm, recall that we need to compute sensitivities $\nabla_{\theta} E_1^{\theta}$ and $\nabla_{\theta} E_2^{\theta}$, defined in \eqref{eq:malliavin}. We compute these by applying Algorithm~\ref{alg:hjd} with path-functionals
\begin{equation*}
\wloss_1(\Xt) 
= \ones_{g(\Xt)>0}\left(\ell(X^\th) \skor(u) -
  \int_0^T (D_t \ell(\Xt)) u_t \, dt \right),
\end{equation*}
and
\begin{equation*}
\wloss_2(\Xt) 
= \ones_{\{g(\Xt)>0\}} \skor(u),
\end{equation*}
respectively, where $u$ solves \eqref{eq:ut_cond}. Notice that in Algorithm~\ref{alg:hjd}, we only need to plug in these loss functionals, and not their derivatives w.r.t.\ $\theta$, to compute $\nabla_{\theta} E_1^{\theta}$ and $\nabla_{\theta} E_2^{\theta}$. Computation of $\skor(u)$ and $D_t\ell(\Xt)$ is attained as described in \eqref{eq:mall_form} and \eqref{eq:skor_exp} in Section~\ref{sec:mall_comp}; see, e.g., \citep{gobet2005sensitivity} for implementation details. In Section~\ref{sec:num} we illustrate such computations for a Black-Scholes model.

\subsection{Connection to the Infinitesimal Generator}

The weak-derivative estimator is traditionally developed for discrete-time Markov chains. The aim of this section is to connect the Euler/HJ construction to the continuous-time diffusion via the infinitesimal generator and its Fokker–Planck equation. Two complementary viewpoints underlie the method:

\subsubsection{Infinitesimal Generator Formulation}

Let $p_t^\theta$ denote the density of $X_t^\theta$ and $L^\theta$ the generator of~\eqref{eq:sde}. The Fokker--Planck equation reads
\[
\partial_t p_t^\theta = (L^\theta)^{\!*}p_t^\theta.
\]
Differentiating with respect to~$\theta$ yields the sensitivity PDE
\[
\partial_t \nu_t = (L^\theta)^{\!*}\nu_t + \nabla_{\theta} (L^\theta)^{\!*}p_t^\theta,
\qquad \nu_t := \nabla_{\theta} p_t^\theta.
\]
By Duhamel’s principle,
\[
\nu_T = \int_0^T P_{T-s}^\theta \big( \nabla_{\theta} (L^\theta)^{\!*}p_s^\theta \big)\,ds,
\]
so the derivative measure at time~$T$ is an integral of \emph{signed mass injections}
$\nabla_{\theta} (L^\theta)^{\!*}p_s^\theta$ transported forward by the semigroup $P^\theta$.
A Hahn--Jordan decomposition can be applied to each signed measure 
$\nabla_{\theta} (L^\theta)^{\!*}p_s^\theta$.

\subsubsection{Discrete Euler Formulation.}
The Euler--Maruyama discretization induces kernels $P^\theta_{\Delta t,k}$ of the form~\eqref{eq:gauskern}. The weak derivatives $\nabla_{\theta} P^\theta_{\Delta t,k}$ are finite signed measures that admit Hahn--Jordan decompositions~\eqref{eq:hjd}. The finite-horizon product rule then yields the split-at-every-point representation~\eqref{eq:mvd-product}, whose single-run realization is Algorithm~\ref{alg:hjd}.

\subsubsection{Consistency.}
As $\Delta t \to 0$, the Riemann-sum representation
\begin{align*}
    \sum_{k=0}^{M-1} 
    P^\theta_{T-t_{k+1}}\big(\nabla_{\theta} P^\theta_{\Delta t,k}\big) P^\theta_{t_k}
\;\longrightarrow\;
\int_0^T P^\theta_{T-s}\big(\nabla_{\theta} (L^\theta)^{\!*}p_s^\theta\big)\,ds
\end{align*}
formalizes the equivalence between the generator-based and Euler-based descriptions. Thus the Euler/HJ scheme is a Monte Carlo realization of the generator-level Hahn–Jordan decomposition, valid for non-stationary initial conditions, with the same $O(1)$ variance properties but amenable to practical implementation.

\section{Numerical Implementation. Conditional Black-Scholes}
\label{sec:num}

Let $(\Omega,\CF,(\CF_t)_{t\in[0,T]},\PR)$ support a 1D Brownian motion $(W_t)_{t\in[0,T]}$.
Under a risk-neutral measure, consider the Black--Scholes model
\begin{equation*}
\label{eq:bs_S}
dS_t^\theta = r S_t^\theta\,dt + \theta S_t^\theta\,dW_t,\qquad S_0>0,
\end{equation*}
and define the log-price $X_t^\theta := \log S_t^\theta$, so that
\begin{equation}
\label{eq:bs_X}
dX_t^\theta = \Big(r-\tfrac12\theta^2\Big)\,dt + \theta\,dW_t,\qquad X_0=\log S_0.
\end{equation}
Fix a stressed intermediate level $s>0$ and define the anticipatory constraint
\begin{equation*}
\label{eq:g_bs}
g(X^\theta) := X_{T/2}^\theta - \log s.
\end{equation*}
As in Theorem~1, the counterfactual conditional loss is
\begin{equation}
\label{eq:L_bs_def}
L(\theta) := \E\!\left[\ell(X^\theta)\mid g(X^\theta)=0\right]
            = \frac{E_1^\theta}{E_2^\theta},
\end{equation}
where
\begin{align}
\begin{split}
\label{eq:E1E2_general}
E_1^\theta &:= \E\!\left[\1_{\{g(X^\theta)>0\}}
\Big(\ell(X^\theta)S(u)-\int_0^T (D_t\ell(X^\theta))\,u_t\,dt\Big)\right],\\
E_2^\theta &:= \E\!\left[\1_{\{g(X^\theta)>0\}}\,S(u)\right],
\end{split}
\end{align}
and $u$ satisfies the normalization condition $\E[\int_0^T (D_t g(X^\theta))u_t\,dt]=1$.

\subsection{Choice of payoff and parameter.}
Let $\ell$ be the discounted European call payoff
\begin{equation*}
\label{eq:call_payoff}
\ell(X^\theta) := e^{-rT}\,(S_T^\theta-K)^+ = e^{-rT}\,(e^{X_T^\theta}-K)^+,
\end{equation*}
and treat $\theta$ as the parameter of interest (so $\nabla_\theta L(\theta)$ is a
conditional \emph{vega} under the stressed event $S_{T/2}=s$).

\subsection{Malliavin derivatives for the log-price model.}
Since \eqref{eq:bs_X} is affine in $W$, its Malliavin derivative is explicit:
\begin{equation*}
\label{eq:DX}
D_t X_\tau^\theta = \theta\,\1_{\{t\le \tau\}},\qquad 0\le t,\tau\le T.
\end{equation*}
Therefore,
\begin{equation*}
\label{eq:Dg}
D_t g(X^\theta)=D_tX_{T/2}^\theta=\theta\,\1_{\{t\le T/2\}}.
\end{equation*}

Because $D_t g$ is deterministic and adapted, we choose an adapted $u$ supported on $[0,T/2]$:
\begin{equation*}
u_t := \frac{2}{\theta T}\,\1_{\{t\le T/2\}}.
\end{equation*}
Then $\int_0^T (D_t g)u_t\,dt=\int_0^{T/2}\theta\cdot \frac{2}{\theta T}\,dt=1$, so the
normalization condition holds.
Since $u$ is adapted, $S(u)$ coincides with an It\^{o} integral:
\begin{equation*}
\label{eq:Su_explicit}
S(u) = \int_0^T u_t\,dW_t = \frac{2}{\theta T}\int_0^{T/2} dW_t
     = \frac{2}{\theta T}\,W_{T/2}.
\end{equation*}

\subsection{Malliavin derivative of the call payoff.}
Using $S_T^\theta=e^{X_T^\theta}$ and the chain rule,
\begin{equation*}
\label{eq:DS}
D_t S_T^\theta = S_T^\theta\,D_t X_T^\theta = \theta S_T^\theta\,\1_{\{t\le T\}}=\theta S_T^\theta.
\end{equation*}
Then,
\begin{equation*}
\label{eq:Dell}
D_t\ell(X^\theta)
= e^{-rT}\,\1_{\{S_T^\theta>K\}}\,D_t S_T^\theta
= e^{-rT}\,\1_{\{S_T^\theta>K\}}\,\theta S_T^\theta.
\end{equation*}
Because $u$ is supported on $[0,T/2]$, the correction term in \eqref{eq:E1E2_general} becomes

\begin{align}
\label{eq:correction_simplify}
\int_0^T (D_t\ell(X^\theta))\,u_t\,dt
&= \int_0^{T/2} \Big(e^{-rT}\,\1_{\{S_T^\theta>K\}}\,\theta S_T^\theta\Big)\,\frac{2}{\theta T}\,dt \\
&= e^{-rT}\,\1_{\{S_T^\theta>K\}}\,S_T^\theta.
\nonumber
\end{align}

\paragraph{Final explicit $E_1^\theta,E_2^\theta$ for the Black--Scholes stressed call.}
With $g(X^\theta)=X_{T/2}^\theta-\log s$ and \eqref{eq:correction_simplify} we obtain,
\begin{align*}
\label{eq:E1_bs_final}
E_1^\theta
&=\E\!\left[\1_{\{X_{T/2}^\theta>\log s\}}
\Big(e^{-rT}(S_T^\theta-K)^+ \cdot \frac{2}{\theta T}W_{T/2}
      - e^{-rT}\,\1_{\{S_T^\theta>K\}}\,S_T^\theta\Big)\right],\\
E_2^\theta
&=\E\!\left[\1_{\{X_{T/2}^\theta>\log s\}}\cdot \frac{2}{\theta T}W_{T/2}\right].
\end{align*}
Thus the conditional call price under the counterfactual event $\{S_{T/2}=s\}$ is
$L(\theta)=E_1^\theta/E_2^\theta$.

\subsection{Conditional vega via weak derivatives.}
By the quotient rule, 
\begin{equation}
\label{eq:vega_ratio}
\nabla_\theta L(\theta)=\frac{E_2^\theta\,\nabla_\theta E_1^\theta - E_1^\theta\,\nabla_\theta E_2^\theta}{(E_2^\theta)^2}.
\end{equation}
To compute $\nabla_\theta E_1^\theta$ and $\nabla_\theta E_2^\theta$, apply the single-run
Hahn--Jordan estimator (Algorithm~\ref{alg:hjd}) to the path-functionals
\begin{align*}
\label{eq:C1C2_bs}
C_1(X^\theta) &:= \1_{\{g(X^\theta)>0\}}
\Big(\ell(X^\theta)S(u)-\int_0^T (D_t\ell(X^\theta))u_t\,dt\Big),\\
C_2(X^\theta) &:= \1_{\{g(X^\theta)>0\}}\,S(u),
\end{align*}
with $\ell,g,u$ given above and Euler discretization of \eqref{eq:bs_X}.

Figure~\ref{fig:mc_convg} illustrates the convergence of estimator \eqref{eq:L_bs_def} with respect to the number of Monte Carlo sample paths. It can be seen that we recover the classical $O(N^{-1/2})$ convergence rate, \textit{despite the measure-zero conditioning event}. Figure~\ref{fig:varscale} illustrates the estimator \eqref{eq:vega_ratio} variance when the $\theta$-gradients are computed with a score function (yielding $O(T)$ variance) and with the weak derivative method of Algorithm~\ref{alg:hjd} (yielding $O(1)$ variance). Thus, the Malliavin conditioning reformulation paired with the weak derivative gradient estimation forms a principled efficient pipeline for conditional gradient estimation. 

The code producing these figures is open-source, and can be found at \texttt{\small https://github.com/LukeSnow0/Malliavin-WD}.
\begin{figure}
\centering
\includegraphics[width=0.6\linewidth]{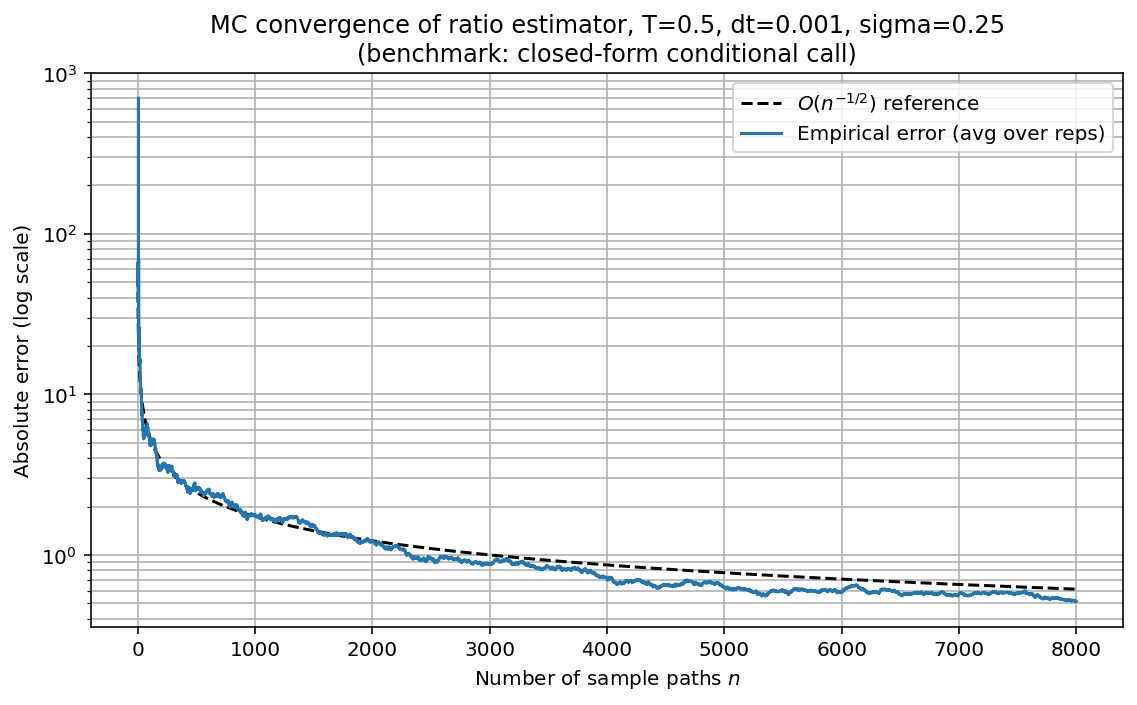}
\caption{Convergence of Monte-Carlo estimate of Malliavin quotient \eqref{eq:L_bs_def}, with respect to simulated paths $N$. We see that we recover a $O(N^{-1/2})$ convergence rate \textit{even though we condition on a measure-zero event}. }\label{fig:mc_convg}
\end{figure}

\begin{figure}[t]
    \centering
    \includegraphics[width=0.5\linewidth]{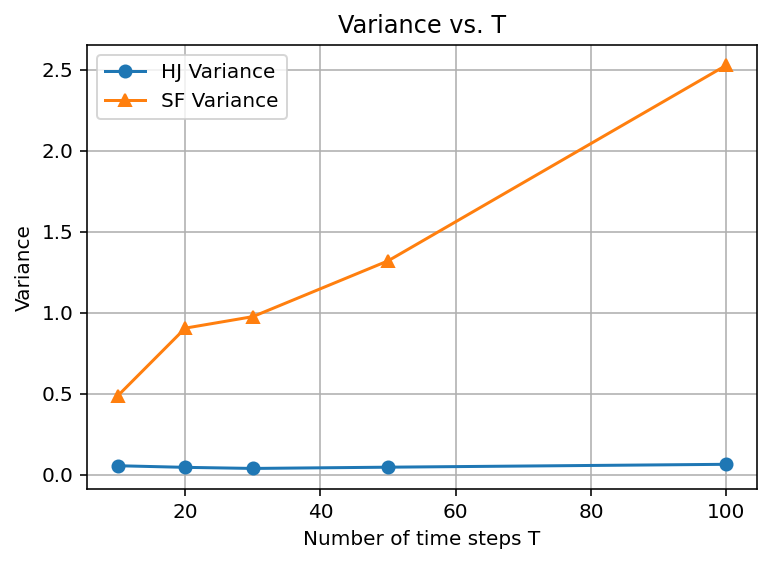}
    \caption{Variance scaling of the weak derivative estimator and the score function estimator, for varying time horizon $T$. We verify the stable $O(1)$ variance scaling of the weak derivative estimator, in contrast to the $O(T)$ variance scaling of the score function estimator.}\label{fig:varscale}
\end{figure}

\section{Conclusion}

We have presented a methodology for counterfactual estimation of conditional Greeks. Instead of relying on methods in kernel smoothing or direct conditional Monte-Carlo, we exploit a reformulation of the conditional expected gradient by Malliavin calculus. This allows for recovery of the $O(N^{-1/2})$ Monte-Carlo convergence rate even when conditioning on rare or measure-zero events, where direct Monte-Carlo becomes impossible and kernel smoothing methods infeasible and inefficient. Furthermore, we combine this approach with a weak-derivative gradient estimation algorithm which incurs stable $O(1)$ variance scaling in the time-horizon, in contrast to score function methods which scale as $O(T)$. The combination of these two methodologies in the framework of passive stochastic optimization is a novel approach which allows for efficient gradient evaluation, in particular conditional Greek evaluation, even under rare conditioning events or highly mis-specified functional evaluations. 

\bibliography{main}

\end{document}